\newcommand{\cV}{\mathcal{V}}
\newcommand{\cN}{\mathcal{N}}
\newcommand{\cW}{\mathcal{W}}
\newcommand{\cE}{\mathcal{E}}
\newcommand{\cG}{\mathcal{G}}
\newcommand{\real}{\mathbb{R}}
\newcommand{\Pil}{\mathrm{\Pi}}
\newcommand{\oset}[3][0ex]{%
  \mathrel{\mathop{#3}\limits^{
    \vbox to#1{\kern-2\ex@
    \hbox{$\scriptstyle#2$}\vss}}}}
\begin{document}

\frontmatter

\mainmatter

\title{Bounding the convergence time of local probabilistic evolution}
\author{Simon Apers\inst{1} \and Alain Sarlette\inst{1,2} \and Francesco Ticozzi\inst{3,4}}
\authorrunning{Simon Apers et al.}
\institute{Department of Electronics and Information Systems, Ghent University, Belgium
\and
QUANTIC lab, INRIA Paris, France
\and
Dipartimento di Ingegneria dell'Informazione, Universit\`a di Padova, Italy
\and
Department of Physics and Astronomy, Dartmouth College, NH 03755, USA\\
\email{simon.apers@ugent.be,alain.sarlette@inria.fr,ticozzi@dei.unipd.it}
}

\maketitle

\begin{abstract}
Isoperimetric inequalities form a very intuitive yet powerful characterization of the connectedness of a state space, that has proven successful in obtaining convergence bounds. Since the seventies they form an essential tool in differential geometry \cite{cheeger1969,buser1982}, graph theory \cite{fiedler1973,alon1985} and Markov chain analysis \cite{dodziuk1984,aldous1987,lawler1988}. In this paper we use isoperimetric inequalities to construct a bound on the convergence time of any local probabilistic evolution that leaves its limit distribution invariant. We illustrate how this general result leads to new bounds on convergence times beyond the explicit Markovian setting, among others on quantum dynamics.
\end{abstract}

This paper is concerned with the discrete-time spreading of a distribution along the edges of a graph. In essence we establish that even by exploiting global information about the graph and allowing a very general use of this information, this spreading can still not be accelerated beyond the conductance bound. Before providing more ample context, we start with a motivating example ascribed to Eugenio Calabi, but which came to our attention through the seminal 1969 paper by Jeff Cheeger \cite{cheeger1969}. Whereas the original example concerns differential geometry, we will apply it to a graph setting. 

\begin{figure}[htb]
 \centering
   \def\svgwidth{.7\columnwidth}
   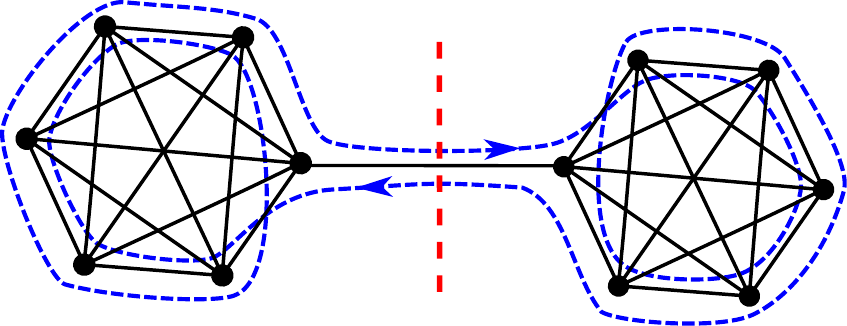
   \caption{(solid line) Dumbbell graph $K_n$-$K_n$ for $n=6$. (dashed line) superimposed cycle of length $4n$ in a construction towards faster mixing.}
   \label{fig:dumbbell}
\end{figure}

Consider a locality structure (discrete geometry) prescribed by the ``dumbbell'' graph family $K_n-K_n$ shown in Figure \ref{fig:dumbbell}, consisting of two complete graphs over $n$ nodes, connected by a single edge. The diameter of this graph, being the ``longest shortest path'' between any two nodes, is three. However, a random walk over this graph converging to the uniform distribution has an expected convergence time in $O(n^2)$. This convergence time can be improved with a ``global design'' but without violating locality of the evolution, by adding some memory to the walker. In Figure \ref{fig:dumbbell}, the system designer has superimposed a cycle (dashed line) over the dumbbell graph. By adding subnodes that allow to conditionally select different subflows through the graph (formally we ``lift'' the walk \cite{chen1999}), the walker can be restricted to walk along this cycle. Using this cycle, we can impose a strategy by Diaconis, Holmes and Neal \cite{diaconis2000,chen1999} to efficiently speed-up mixing over this cycle: \textit{let the walker cycle in the same direction with a probability $1-1/n$, and switch direction with probability $1/n$.} This way the walk will mix over the graph in $O(n)$, i.e.~quadratically faster than the original random walk. But this is still order $n$ times slower than the diameter. Nevertheless, we show in our paper that this improvement is the best possible for any local probabilistic process that leaves the target distribution invariant. So mixing in diameter time may be possible, but not without loosening any of these constraints.

\paragraph{\bf 1. Problem description and main result:}
Consider a graph $\cG$ with nodes $\cV$ and edges $\cE\subseteq \cV\times \cV$. We use the convention that $(i,i) \in \cE$ $\forall i \in \cV$. We define ``states'' $X$ as probability distributions over $\cV$. Given an initial state $X_0$, some system ``$\rightarrow$'' propagates it over $t$ time steps as $X_0\oset{t}{\to} X_t$. For a subset $\cW\subseteq \cV$ and a state $X$, we define $X(\cW)$ the probability of $\cW$ according to $X$, and $X|\cW$ as the state $X$ conditioned on being in $\cW$. We call $\cN(\cW)$ the neighborhood of $\cW\subseteq \cV$, i.e., the nodes outside $\cW$ that have an edge going to $\cW$. We impose the following fundamental properties.
 \begin{itemize}
  \item \textbf{linear initialization:} $\;\; X_0\oset{t}{\rightarrow} X_t,\; \tilde{X}_0\oset{t}{\rightarrow}\tilde{X}_t 
	   \;\;\Rightarrow\;\; pX_0 + (1\text{-}p)\tilde{X}_0 \oset{t}{\rightarrow} pX_t + (1\text{-}p)\tilde{X_t}$\vspace{2mm}
  \item \textbf{locality:} $\;\; \forall X_0, t\geq 0, \cW\subseteq \cV:\;\; X_{t+1}(\cW) \leq X_t(\cW) + X_t(\cN(\cW))$\vspace{2mm}
  \item \textbf{invariance:} 
$ X_0\oset{+\infty}{\rightarrow}\Pil\; \forall X_0 
		   \;\;\Rightarrow\;\; \Pil\oset{t}{\to} \Pil $
 \end{itemize}
The last property states that the unique steady state distribution of the system must be invariant as an initial condition.
The second property expresses that probability weight can only flow along an edge at each time, without referring to details of the system mapping ``$\rightarrow$''. The first property is natural as the input is a probability distribution. The point however is that the general process ``$\rightarrow$'' may e.g.~contain hidden states, and we here impose a linear initialization with the hidden states as well (see example below).

Our theorem presents a bound on the convergence of a system ``$\rightarrow$'' that obeys these conditions towards its steady state $\Pil$. Explicitly, let $\tau$ be a time step such that $\|X_{t}-\Pil\|_1 \leq 1/2$ for all $t \geq \tau$. In discrete geometry, given a graph $\cG$ and a limit distribution $\Pil$, the isoperimetric measure $\Phi$, which we also call the ``conductance'' \cite{aldous2002}, can be defined as:
 \[ \Phi = \max_P \Phi(P),\qquad \Phi(P) = \min_{\cW\subseteq \cV \, : \, \Pil(\cW)\leq 1/2} [P \circ (\Pil|\cW)] \, (\cV\setminus \cW). \]
The maximization is over all stochastic matrices $P$ acting on $\mathbb{R}^{|\cV|}$ that obey the locality of $\cG$ and for which $P \circ \Pil=\Pil$. In other words, ``$P \circ$'' is the most basic type of system ``$\rightarrow$'' satisfying our requirements: it is time-invariant and memoryless (``Markov''). If $\Pi$ is the uniform distribution, then $\Phi$ is upper bounded by the edge expansion of $\cG$, which is $1/n$ for the dumbbell graph. We establish the following ``conductance bound'' for \emph{any} more complicated system.
 \begin{theorem} \label{thm:main}
   If a system is linear, local and invariant, then
$\tau\geq 1/(8\Phi).$
 \end{theorem}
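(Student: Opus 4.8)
The plan is to track the "leakage" out of a worst-case small set $\cW$ and show that locality forbids it from escaping fast, while invariance forces the target mass through precisely that bottleneck. Fix a stochastic matrix $P$ achieving the conductance $\Phi = \Phi(P)$, and let $\cW$ be the minimizing subset in the definition of $\Phi(P)$, so $\Pil(\cW) \le 1/2$ and $[P\circ(\Pil|\cW)](\cV\setminus\cW) = \Phi$. The idea is to choose the "bad" initial condition $X_0 = \Pil|\cW$, the target distribution conditioned on $\cW$, and argue that after $t$ steps the mass $X_t(\cW)$ cannot have dropped much below $1$, so $X_t$ is still far from $\Pil$ in total variation as long as $t < 1/(8\Phi)$.

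First I would turn the one-step locality inequality into a one-step bound on how much mass can leave $\cW$. Locality says $X_{t+1}(\cV\setminus\cW) \le X_t(\cV\setminus\cW) + X_t(\cN(\cV\setminus\cW))$, but that is too crude; the sharper route is to use invariance together with linear initialization. Write $\Pil = \Pil(\cW)\,(\Pil|\cW) + \Pil(\cV\setminus\cW)\,(\Pil|\cV\setminus\cW)$, and push both pieces forward $t$ steps: by invariance the left side stays $\Pil$, and by linearity the right side is $\Pil(\cW)\,X_t + \Pil(\cV\setminus\cW)\,Y_t$ where $Y_t$ is the evolution of $\Pil|\cV\setminus\cW$. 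Evaluating on $\cV\setminus\cW$ and on $\cW$ gives two scalar identities linking $X_t(\cW)$, $Y_t(\cW)$ and the fixed numbers $\Pil(\cW)$, $\Pil(\cV\setminus\cW)$. Combining these with the locality inequality applied to each of $X_t$ and $Y_t$ should yield a recursion of the shape $X_t(\cW) \ge X_{t-1}(\cW) - (\text{something})\cdot\Phi$, where the $\Phi$ enters because the maximal admissible one-step flow of $\Pil$-mass across the cut $(\cW,\cV\setminus\cW)$ is, essentially by definition of $\Phi$ as a max over all such $P$, at most $\Phi$ times the relevant conditional masses. Iterating, $X_t(\cW) \ge 1 - c\,t\,\Phi$ for a universal constant $c$.

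The main obstacle — and the step that needs the most care — is making the phrase "the maximal one-step flow across the cut is at most $\Phi$" precise for a \emph{general} linear, local, invariant system "$\to$" rather than for a single stochastic matrix $P$. The subtlety is exactly the point the paper stresses: "$\to$" may carry hidden states, so $X_t$ is not literally $P^t X_0$. The resolution I would pursue is that the combination of linear initialization and invariance lets us, at each time step, treat the action of "$\to$" on the convex mixture $\Pil = \Pil(\cW)\,(\Pil|\cW) + \Pil(\cV\setminus\cW)\,(\Pil|\cV\setminus\cW)$ as if performed by \emph{some} stochastic matrix $P_t$ (depending on the current internal configuration) that fixes $\Pil$ and respects locality; then $\Phi$, being the max of $\Phi(P)$ over all such matrices, upper-bounds the cut-flow at every step uniformly, and the hidden states never help. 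Verifying that this reduction is legitimate — i.e. that a one-step conductance-type bound survives the hidden-state generality, using only the three axioms — is where the real work lies.

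Finally, once $X_t(\cW) \ge 1 - c\,t\,\Phi$ is in hand, I would convert it to a total-variation statement: since $\Pil(\cW) \le 1/2$, we have $\|X_t - \Pil\|_1 \ge 2\big(X_t(\cW) - \Pil(\cW)\big) \ge 2\big(1 - c\,t\,\Phi - 1/2\big) = 1 - 2c\,t\,\Phi$, which exceeds $1/2$ whenever $t < 1/(4c\,\Phi)$. Tracking the constant $c$ through the recursion should give $c = 2$, yielding $\tau \ge 1/(8\Phi)$ exactly as claimed. The bookkeeping of constants is routine; the conceptual content is entirely in the hidden-state reduction of the previous paragraph.
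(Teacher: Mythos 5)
There is a genuine gap, and it sits exactly where you place ``the real work'': the claimed reduction of one step of a general linear, local, invariant system to ``\emph{some} local stochastic matrix $P_t$ that fixes $\Pil$'' does not follow from the three axioms, and the per-step flow bound you want from it is false as stated. Locality does let one simulate each step by a local stochastic matrix (this is the paper's Lemma~\ref{lem:simulable}, proved by a max-flow--min-cut argument), but that matrix depends on the initial distribution as well as on time: the matrices simulating the trajectory started at $\Pil|\cW$ are in general different from those simulating the trajectory started at $\Pil$, because the system may carry hidden internal states (lifts, coherences, memory of $X_0$). Invariance only constrains the trajectory launched at $\Pil$ itself, so the matrices governing your piece $\Pil|\cW$ need not preserve $\Pil$ at all; and linear initialization says the mixture of trajectories is the trajectory of the mixture, \emph{not} that a common linear map on distributions over $\cV$ implements each step for both pieces of your decomposition. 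This also breaks the monotonicity/domination step that makes the per-step leak bound work for a genuine Markov chain: for a single $\Pil$-fixing $P$ one propagates $X_t\leq \Pil/\Pil(\cW)$ pointwise and picks the cut minimizing $P$'s own flow, but here the effective matrices change every step and the pointwise domination is not preserved. Finally, even granting a per-step local $\Pil$-fixing matrix, $\Phi$ bounds each matrix's flow across \emph{its own} minimizing cut; it does not bound the flow across your fixed $\cW$ (chosen as the minimizer for the maximizing $P$), so ``leak at most $c\,\Phi$ per step across $\cW$'' has no justification.

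The paper circumvents precisely these obstructions by a different route: it uses Lemma~\ref{lem:simulable} to extract the state- and time-dependent matrices $P(t,e_i)$, assembles them with a clock register into a single time-homogeneous \emph{lifted} Markov chain $M$ on $\hat{\cV}=(\cV;\{0,\dots,\tau-1\};\cV)$, shows $M$ converges (over the relevant subsets) within $2\tau$ (Lemma~\ref{lemma:conv-M}), and then invokes an adaptation of the Chen--Lov\'asz--Pak conductance bound for lifted chains (Lemma~\ref{lemma:bound-M}, proved in \cite{OurPrep}). The essential mechanism there is a time-averaging of the flows over the whole horizon, which produces a single local, (approximately) $\Pil$-preserving matrix to which the definition of $\Phi$ can be applied --- this is what replaces the per-step cut argument you propose, and it is nontrivial enough that it cannot be waved through as bookkeeping. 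To repair your proposal you would need either to prove such a time-averaged reduction yourself or to follow the lifting construction; as written, the central step is an unproved (and in general false) claim rather than a routine verification.
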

So for the dumbbell graph with $\Pi$ the uniform distribution, we find $\tau\geq n/8$ for any linear, local and invariant system.

Mixing on graph structures has drawn much interest for sampling algorithms, see e.g.~perfect matching \cite{jerrum1989}, or the Metropolis-Hastings algorithm used a lot in statistical mechanics. Bounds similar to Thm.1 have originally been proven by Cheeger \cite{cheeger1969} and Buser \cite{buser1982} in a differential geometry setting, and by Fiedler \cite{fiedler1973}, Dodziuk \cite{dodziuk1984} and Alon \cite{alon1985} in a discrete geometry and graph setting. In Markov chain analysis, early uses trace back to Aldous \cite{aldous1987}, Lawler and Sokal \cite{lawler1988} and Mihail \cite{mihail1989}. More recent examples are by Chen, Lov\'asz and Pak \cite{chen1999} who used a similar bound to prove that a restricted class of extended Markov chains called ``lifted Markov chains'' can at most quadratically accelerate convergence, and by Aharonov et al.~\cite{aharonov2001} to (loosely) bound the convergence speed of certain quantum processes.

Our result allows to improve known mixing bounds, e.g.~for quantum processes, and to generalize bounds beyond usual Markov chain settings, e.g.~by including nonlinear decision rules. Examples are briefly discussed after the proof.

\paragraph{\bf 2. Proof:} Our proof essentially comes down to two steps:
 \begin{itemize}
  \item \textit{locality implies particular simulability:} the locality condition implies that the dynamics can always be described using (time- and state-dependent) local stochastic matrices. This is not entirely trivial in such generality.
  \item \textit{bound for extended Markov chains:} we rather straightforwardly combine these matrices in an extended Markov chain model, for which we can prove the bound along standard lines.
 \end{itemize}

\paragraph{\bf 2.1. Locality implies simulability:}
A stochastic matrix $P$ is local if the system $X_0\to X_t=P^t \circ X_0$ is local. It is not hard to check that this coincides with the traditional definition, where locality means $P_{i,j}=0$ whenever $(i,j) \notin \cE$. The following Lemma kind of proves the converse. Its proof is inspired by a related result of Scott Aaronson \cite{aaronson2005}, establishing the lemma for quantum systems whose evolution is governed by a local unitary matrix.
\begin{lemma}\label{lem:simulable}
 If ``$\rightarrow$'' is a local system, then for every pair $(X_0,t)$ with $t>0$ there exists a local stochastic matrix $P(t,X_0)$ such that $X_0 \oset{t}{\rightarrow} X_t=P(t,X_0) \circ X_{t-1}$.
\end{lemma}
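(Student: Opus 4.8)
The plan is to discard the (possibly memory-ful, nonlinear) internals of ``$\rightarrow$'' and work only with the two snapshots $p:=X_{t-1}$ and $q:=X_t$ produced along the trajectory from $X_0$. Applying the locality property at time $t-1$ gives
\[ q(\cW)\;\le\; p(\cW)+p(\cN(\cW))\;=\;p(\cW^{+})\qquad\text{for every }\cW\subseteq\cV, \]
where $\cW^{+}:=\cW\cup\cN(\cW)$ collects all nodes from which weight can reach $\cW$ in one step (it contains $\cW$ thanks to the self-loops $(i,i)\in\cE$). So it is enough to prove the purely distributional claim: \emph{if $p,q$ are probability distributions on $\cV$ with $q(\cW)\le p(\cW^{+})$ for every $\cW$, then there is a local stochastic matrix $P$ with $q=P\circ p$}; one then sets $P(t,X_0):=P$. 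Notice that neither linear initialization nor invariance is used --- only locality, and only through this inequality between two consecutive marginals.

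I would then recast the existence of $P$ as a feasibility question for a single-commodity flow. Build a network with a source $s$, a sink $z$, and two disjoint copies $\{u_i\}_{i\in\cV}$, $\{w_j\}_{j\in\cV}$ of the node set: arcs $s\to u_i$ of capacity $p(i)$, arcs $w_j\to z$ of capacity $q(j)$, and an arc $u_i\to w_j$ of capacity $+\infty$ whenever weight may flow from $i$ to $j$ along $\cG$. Since the arcs leaving $s$ (resp.\ entering $z$) have total capacity $\sum_i p(i)=1$ (resp.\ $\sum_j q(j)=1$), an $s$--$z$ flow of value $1$ is the same data as a collection $f_{ij}\ge 0$, supported on the allowed edges, with $\sum_j f_{ij}=p(i)$ and $\sum_i f_{ij}=q(j)$ for all $i,j$.

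The crucial step is max-flow--min-cut (equivalently the Gale--Hoffman feasibility conditions, a weighted defect form of Hall's theorem), which applies since $\cV$ is finite and the capacities are real. Cutting all arcs $s\to u_i$ exhibits a cut of capacity $1$, so the maximal flow value is at most $1$; by duality it equals $1$ once every $s$--$z$ cut has capacity $\ge 1$. A cut of finite capacity is pinned down by the set $\cW\subseteq\cV$ of right-copies placed on the sink side: to avoid severing an $\infty$-arc, every $u_i$ with $i\in\cW^{+}$ must also go to the sink side, while putting any further $u_i$ there only raises the cost; hence the cheapest such cut has capacity $p(\cW^{+})+\bigl(1-q(\cW)\bigr)$. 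Therefore $\mathrm{mincut}=\min_{\cW}\bigl[\,p(\cW^{+})+1-q(\cW)\,\bigr]$, and this is $\ge 1$ \emph{exactly} under our hypothesis $q(\cW)\le p(\cW^{+})$. So a flow $(f_{ij})$ as above exists; set $P_{ji}:=f_{ij}/p(i)$ when $p(i)>0$ and $P_{ii}:=1$ when $p(i)=0$. Then $P$ is stochastic, vanishes off the edges of $\cG$ (the added self-loops are edges), and $(P\circ p)_j=\sum_{i:\,p(i)>0}f_{ij}=q(j)$, i.e.\ $P\circ p=q$, as required.

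The only genuinely delicate point is matching the min-cut value with the quantity in the locality inequality; the rest is bookkeeping, plus a little care for nodes with $p(i)=0$. I would also note that iterating the lemma writes the whole trajectory as a product of local stochastic matrices, $X_t=P(t,X_0)\circ\cdots\circ P(1,X_0)\circ X_0$, which is the form fed to the extended Markov chain in the second step of the proof --- while the dependence of these matrices on the particular pair $(t,X_0)$ is essential and cannot be dropped, since a nonlinear or memory-ful ``$\rightarrow$'' need not coincide with any single Markov chain.
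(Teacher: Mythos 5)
Your proof is correct and follows essentially the same route as the paper: reduce to the two consecutive marginals, build the source--sink network over two copies of $\cV$, and invoke max-flow--min-cut, observing that every finite cut has value $p(\cW\cup\cN(\cW))+1-q(\cW)\geq 1$ precisely by the locality inequality. The only (welcome) differences are cosmetic --- you use infinite rather than unit capacities on the middle arcs and explicitly handle nodes with $p(i)=0$, a detail the paper glosses over.
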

\begin{proof}
Call $Y=X_{t-1}$ and $Z=X_t$. We make a digression to \emph{flows over capacitated networks} \cite{ford1956} and consider the one shown in Figure \ref{fig:stoch-bridge}. The network consists of a source node $s$, a sink node $t$, and two copies of the graph nodes $\cV$ and $\cV'$. Node $s$ is connected to any node $i\in \cV$ with capacity $Y(i)$, any node $i\in \cV$ is connected with capacity 1 to any node $j\in \cV'$ iff $(i,j)\in \cE$ (else the nodes are not connected), and any node $j\in \cV'$ is connected to node $t$ with capacity $Z(j)$. If this network can route a steady flow of value 1 from node $s$ to node $t$, then the fraction of $Y(i)$ that is routed from $i \in \cV$ towards $j\in \cV'$ directly defines $P_{j,i}(t,X_0)$, as $Z(j) = \sum_{i\in\cV} P_{j,i}(t,X_0)Y(i)$ and so $P(t,X_0)\circ Y = Z$.

  \begin{figure}[htb]
    \centering
    \def\svgwidth{.65\columnwidth}
    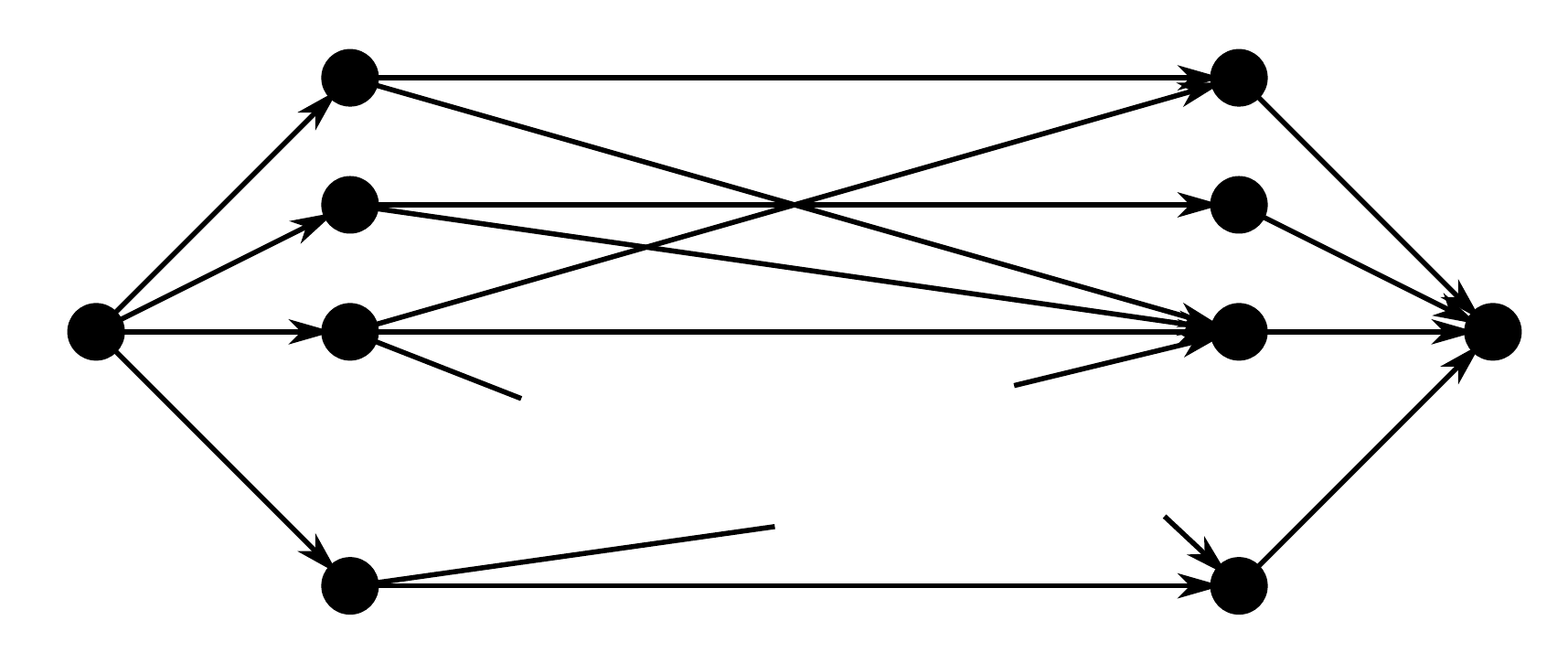
    \caption{Capacitated network construction used in Lemma \ref{lem:simulable}.}
    \label{fig:stoch-bridge}
   \end{figure}

The max-flow-min-cut theorem \cite{ford1956} states that the maximum steady flow which can be routed from node $s$ to node $t$ is equal to the minimum cut value of the graph, where a cut value is the sum of the capacities of a set of edges that disconnects $s$ from $t$.

It is clear that cutting all edges arriving at $t$ disconnects the graph, with a cut value of $1$, whereas cutting any middle edge between $\cV$ and $\cV'$ gives a cut value $\geq 1$. So the minimum cut need involve no such middle edge. Let us try to not cut the edges from $\cW \subseteq \cV'$ to $t$. To block any flow from $s$ to $t$ while keeping all middle edges, we must then cut the edges from $s$ to all the $l \in \cV$ which have an edge to $\cW$. This corresponds to all $l \in \cW \cup \cN(\cW)$. The value of this cut is thus
$${\textstyle 1-\sum_{j \in \cW} Z(j) + \sum_{j \in \cW} Y(j) + \sum_{j \in \cN(\cW)} Y(j) \; .}$$
Recalling that $Y=X_{t-1}$ and $Z=X_t$, locality imposes
$${\textstyle \sum_{j\in \cW}Z(j) \leq \sum_{j \in \cW} Y(j) + \sum_{j \in \cN(\cW)} Y(j) \, ,}$$
from which follows that the minimum cut value is $\geq 1$. According to the previous arguments, this concludes the proof. \qed
\end{proof}

\paragraph{\bf 2.2. Bound for ``extended'' Markov chains:} On the basis of these $P(t,X_0)$, we show how to construct a local Markov chain with at most twice the convergence time $\tau$ of our original system ``$\rightarrow$''. Thereto, we first define a closely related system
 \[ \oset{t}{\rightsquigarrow} \quad\equiv \quad \text{iterate } \oset{\tau}{\rightarrow} \quad \left( \text{namely floor} (t / \tau) \text{ times, plus } t\mathrm{mod}\tau \text{ steps of it} \right)\,. \]
By construction, ``$\rightsquigarrow$'' has the same convergence time $\tau$ as ``$\rightarrow$'', it has the same limit and it obeys the same locality and invariance conditions. We will now build a standard, time-invariant Markov chain that simulates the system ``$\rightsquigarrow$''.

To this end we first extend our state space: the original node set $\cV$ is lifted to $\hat{\cV}=(\cV;\{0,\dots,\tau-1\};\cV)$. From the perspective of a random walker, the first item contains its starting position, the second item a clock variable, and the last item its current position. In matrix form, with $\otimes$ representing the Kronecker product and $\cdot^\dagger$ the transpose, we now build the transition matrix $M$ for a Markov chain on $\hat{\cV}$ as follows:
 \[ M =  \sum_{i\in \cV}\sum_{t=0}^{\tau-2}e_ie_i^\dag \otimes e_{t+1}e_t^\dag \otimes P(t,e_i) 
		     + \sum_{i,j\in \cV} e_je_i^\dag\otimes e_0e_{\tau-1}^\dag\otimes e_je_j^\dag P(\tau-1,e_i). \]
Here $e_i$ is the unit vector with $1$ at index $i$, and $P(t,e_i)$ denotes the transition matrix obtained by Lemma 1 for $X_0 = e_i$, i.e., initial weight concentrated at node $i \in \cV$. This Markov chain simulates the ``$\rightsquigarrow$'' system in the following sense: 
when we locally initialize it in the state
 \[ v[X_0] = \sum_{i\in V} X_0(i) e_i\otimes e_0\otimes e_i\; , \]
the distribution over the subsets $(\cV;\{0,\dots,\tau-1\};i)$ of the resulting state $M^t v[X_0]$ at time $t$ exactly corresponds to $X_t$ resulting from $X_0 \rightsquigarrow X_t$.

A priori our Markov chain only simulates ``$\rightsquigarrow$'' for special initial states of the form $v[X_0]$ over $\hat{\cV}$. The following lemma shows that in fact when starting from an arbitrary distribution over $\hat{\cV}$, it takes at most twice the time to converge to $\|X_{t}-\Pil\|_1 \leq 1/2$ (over sets as just mentioned).
\begin{lemma} \label{lemma:conv-M}
If $\rightarrow$ has a convergence time $\tau$, then the Markov chain $M$ on $\hat{\cV}$ has a convergence time at most $2\tau$ over the subsets $\{\,(\cV;\{0,\dots,\tau-1\};i) : i \in \cV \,\}$.
\end{lemma}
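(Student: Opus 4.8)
The plan is to show that, wherever the chain $M$ starts on $\hat{\cV}$, after at most $\tau$ steps its state is already a convex combination of the ``nice'' initial states $v[X]$ for which the simulation property was established above; since $\rightsquigarrow$ has convergence time $\tau$, after a further $\tau$ steps the marginal over the current-position coordinate is within $1/2$ of $\Pil$, which yields the bound $2\tau$.

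First I would read off the action of $M$ on a single basis state $e_i\otimes e_c\otimes e_k$ of $\hat{\cV}$. From the two terms of $M$, the clock coordinate evolves autonomously and deterministically, $c\mapsto c+1$ while $c\le\tau-2$ and $\tau-1\mapsto 0$; while the clock is in $\{0,\dots,\tau-2\}$ the start-position coordinate stays frozen at $i$ and the current-position coordinate is multiplied by the stochastic matrix $P(c,e_i)$ (a bona fide stochastic matrix regardless of whether the pair $(i,k)$ is ``consistent'', so nothing goes wrong here); and at the reset step $\tau-1\mapsto 0$ the second term $e_je_i^\dag\otimes e_0e_{\tau-1}^\dag\otimes e_je_j^\dag P(\tau-1,e_i)$ copies the new current position into the start-position register. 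Consequently, if $b=e_i\otimes e_c\otimes e_k$ and $s_b$ denotes the number of steps until the first reset (so $s_b=\tau-c$ when $c\ge 1$ and $s_b=\tau$ when $c=0$; in all cases $1\le s_b\le\tau$), then $M^{s_b}b$ is a probability vector supported on the synchronized clock-zero states $\{\,e_j\otimes e_0\otimes e_j:j\in\cV\,\}=\{\,v[e_j]:j\in\cV\,\}$, i.e. $M^{s_b}b=v[X_b]$ for some probability distribution $X_b$ on $\cV$ (whose explicit form, a product of the matrices $P(\cdot,e_i)$ applied to $e_k$, is irrelevant).

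Now let $\mu=\sum_b\mu(b)\,b$ be an arbitrary probability distribution on $\hat{\cV}$. By linearity of $M$, for any $t\ge\tau$ (so $t\ge s_b$ for every $b$),
\[ M^t\mu=\sum_b\mu(b)\,M^{t-s_b}\big(M^{s_b}b\big)=\sum_b\mu(b)\,M^{t-s_b}\,v[X_b]. \]
Fix $t\ge 2\tau$, so that $t-s_b\ge\tau$ for every $b$. By the simulation property, the current-position marginal of $M^{t-s_b}v[X_b]$ equals the state $Y_b$ reached from $X_b$ by the process $\rightsquigarrow$ after $t-s_b$ steps, and since $\rightsquigarrow$ has convergence time $\tau$ and $t-s_b\ge\tau$ we get $\|Y_b-\Pil\|_1\le 1/2$. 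The current-position marginal of $M^t\mu$ is the convex combination $\sum_b\mu(b)\,Y_b$, so by the triangle inequality $\|\sum_b\mu(b)\,Y_b-\Pil\|_1\le\sum_b\mu(b)\,\|Y_b-\Pil\|_1\le 1/2$. Hence for every $t\ge 2\tau$ the distribution of $M^t\mu$ over the subsets $\{\,(\cV;\{0,\dots,\tau-1\};i):i\in\cV\,\}$ is within $1/2$ of $\Pil$ in $\ell_1$, which is the claim.

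I expect the real work to be in the middle step: checking from the block structure of $M$ that the reset term really does turn an \emph{arbitrary} lifted state into a mixture of the special states $v[e_j]$ --- this is precisely where the ``copy current position into the start register'' feature of $M$ is used essentially --- together with the elementary bookkeeping of when the first reset occurs from each clock value. The surrounding steps (linearity, marginalization, triangle inequality) are routine, and the factor $2$ is transparently ``$\tau$ steps to reach a nice state, plus $\tau$ steps for $\rightsquigarrow$ to converge''.
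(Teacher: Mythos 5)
Your proof is correct and follows essentially the same route as the paper's: evolve any basis state of $\hat{\cV}$ until the first clock reset (at most $\tau$ steps) to land on a special state $v[X_b]$, run the simulation of $\rightsquigarrow$ for another $\tau$ steps, and extend to arbitrary initial distributions by linearity and the triangle inequality. The only point you take for granted that the paper re-justifies inside its proof is that $\rightsquigarrow$ really has convergence time $\tau$ (i.e.\ that restarting $\rightarrow$ every $\tau$ steps never increases $\|X_t-\Pil\|_1$, which follows from invariance plus the $\ell_1$-contractivity of the stochastic maps from Lemma~\ref{lem:simulable}); since that property is asserted just before the lemma, this is not a gap.
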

\begin{proof}
Consider an arbitrary initial state $e_i\otimes e_T\otimes e_k$ for the Markov chain $M$. After $\tau-T$ steps, this state will necessarily have evolved to one of the special initial states of the form $v[X_0]$, for some $X_0$. By construction, the distribution of this state over the subsets $\{\,(\cV;\{0,\dots,\tau-1\};i) : i \in \cV \,\}$ will then simulate the evolution $X_0\oset{t}{\rightsquigarrow}X_t$, which converges to $\|X_{t}-\Pil\|_1 \leq 1/2$ for all $t\geq\tau$. Note indeed that $\rightsquigarrow$ and $\rightarrow$ are equivalent over the first $\tau$ time steps, and furthermore by invariance, iterating $\rightarrow$ via $\rightsquigarrow$ will never increase $\|X_{t}-\Pil\|_1$. Hence the Markov chain will have converged after $\tau-T+\tau\leq 2\tau$ steps at most. We have thus proved the convergence time for initial states with all weight concentrated on one element of $\hat{\cV}$. By linearity, this also proves the convergence time for arbitrary initial states. \qed
\end{proof}

The last element of the proof is a lower bound on $\tau$ for standard Markov chains. It essentially follows from a result by Chen, Lov\'asz and Pak \cite{chen1999}, stating that a class of extended Markov chains called ``lifted Markov chains'' that converge over $\hat{\cV}$ can converge at best in order $1/\Phi$, with $\Phi$ the conductance of the original graph. Our Markov chain $M$ does not exactly fit into this framework, because it is periodic on $\hat{\cV}$ and only its projection onto $\cV$ via the subsets of Lemma \eqref{lemma:conv-M} will converge. The proof can however be adapted to this case. Due to space constraints we must refer the reader to \cite{OurPrep} for a detailed proof, and we here only provide the statement:

\begin{lemma} \label{lemma:bound-M}
 The convergence time of $M$ over the sets $\{\,(\cV;\{0,\dots,\tau-1\};i) : i \in \cV \,\}$ is lower bounded by $1/(4\Phi)$.
\end{lemma}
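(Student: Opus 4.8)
The plan is to reduce the periodic lifted chain $M$ to the textbook conductance bottleneck argument for Markov chains (in the spirit of Chen, Lov\'asz and Pak), exploiting that, although $M$ itself does not converge on $\hat{\cV}$, its marginal $\pi_3$ on the third (``current position'') coordinate — the quantity that appears in the statement — does. First I would pin down a stationary distribution of $M$: even though $M$ is periodic, the Ces\`aro averages $\tfrac1T\sum_{t=0}^{T-1}M^t$ converge, so I would set $\hat{\Pil}=\lim_{T\to\infty}\tfrac1T\sum_{t=0}^{T-1}M^t v[\Pil]$, which satisfies $M\hat{\Pil}=\hat{\Pil}$. Since $\Pil$ is invariant under ``$\rightarrow$'' hence under ``$\rightsquigarrow$'', the simulation property of $M$ gives $\pi_3(M^t v[\Pil])=\Pil$ for every $t$, and therefore $\pi_3(\hat{\Pil})=\Pil$.

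Next I would extract from $M$ an \emph{effective} memoryless base chain on $\cV$, defining the matrix $\Psi$ with entries $\Psi_{l,j}=\tfrac1{\Pil(j)}\sum_{z\in\pi_3^{-1}(j)}\sum_{x\in\pi_3^{-1}(l)}\hat{\Pil}(z)\,M_{x,z}$ (restricting $\cV$ to $\mathrm{supp}(\Pil)$ if needed). Using that $M$ is stochastic, that $M\hat{\Pil}=\hat{\Pil}$, and that every block $P(t,e_i)$ inside $M$ is local, one verifies that $\Psi$ is stochastic, that it respects the locality of $\cG$, and that $\Psi\circ\Pil=\Pil$. Hence $\Psi$ is admissible in the definition $\Phi=\max_P\Phi(P)$, so $\Phi(\Psi)\le\Phi$. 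Let $\cW^*$ attain the minimum in $\Phi(\Psi)=\min_{\cW:\,\Pil(\cW)\le 1/2}[\Psi\circ(\Pil|\cW)](\cV\setminus\cW)$, put $\hat{\cW}^*=\pi_3^{-1}(\cW^*)$, and note $\hat{\Pil}(\hat{\cW}^*)=\Pil(\cW^*)\le 1/2$. A direct computation then shows that the normalized stationary flow out of $\hat{\cW}^*$, namely $\tfrac1{\hat{\Pil}(\hat{\cW}^*)}\sum_{z\in\hat{\cW}^*,\,x\notin\hat{\cW}^*}\hat{\Pil}(z)M_{x,z}$, equals exactly $[\Psi\circ(\Pil|\cW^*)](\cV\setminus\cW^*)=\Phi(\Psi)$.

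The remaining step is the standard bottleneck estimate. Initialise $M$ in $\mu_0=\hat{\Pil}|\hat{\cW}^*$ and set $\mu_t=M^t\mu_0$. Since $\mu_0\le\hat{\Pil}/\hat{\Pil}(\hat{\cW}^*)$ entrywise and $M\hat{\Pil}=\hat{\Pil}$, induction gives $\mu_t\le\hat{\Pil}/\hat{\Pil}(\hat{\cW}^*)$ entrywise for all $t$; hence the mass leaving $\hat{\cW}^*$ in one step of $\mu_t$ is at most $[\Psi\circ(\Pil|\cW^*)](\cV\setminus\cW^*)=\Phi(\Psi)$, so $\mu_t(\hat{\cW}^*)\ge 1-t\,\Phi(\Psi)$. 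Because $\pi_3(\mu_t)(\cW^*)=\mu_t(\hat{\cW}^*)$ and $\|\pi_3(\mu_t)-\Pil\|_1\ge 2|\pi_3(\mu_t)(\cW^*)-\Pil(\cW^*)|$, the condition $\|\pi_3(\mu_t)-\Pil\|_1\le 1/2$ forces $\pi_3(\mu_t)(\cW^*)\le\Pil(\cW^*)+\tfrac14\le\tfrac34$, so $1-t\,\Phi(\Psi)\le\tfrac34$, i.e.\ $t\ge\tfrac1{4\Phi(\Psi)}\ge\tfrac1{4\Phi}$. As this must hold for $t$ equal to any convergence time of $M$ over the sets $\{(\cV;\{0,\dots,\tau-1\};i):i\in\cV\}$, that convergence time is at least $1/(4\Phi)$.

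The genuinely routine part is the last paragraph — the classical conductance lower bound. The delicate part, and the reason a dedicated argument is needed rather than a verbatim citation of Chen, Lov\'asz and Pak, is the passage to $\Psi$: one must show that the inhomogeneous-looking, periodic chain $M$, which internally stitches together many different local matrices $P(t,e_i)$, nevertheless collapses \emph{at stationarity} to a single admissible stochastic matrix with $\Psi\circ\Pil=\Pil$, so that the definition of $\Phi$ can legitimately be invoked; and, in tandem, one must handle the periodicity by working throughout with the marginal $\pi_3$ and a Ces\`aro-averaged stationary distribution instead of with convergence on $\hat{\cV}$. Checking that $\Psi$ is simultaneously stochastic, local and $\Pil$-preserving — which hinges on $M\hat{\Pil}=\hat{\Pil}$ together with $M$ being stochastic — is where the bookkeeping must be done with care.
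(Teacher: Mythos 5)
Your proof is correct. The paper itself omits the proof of this lemma (deferring to the companion reference \cite{OurPrep}) and only remarks that the Chen--Lov\'asz--Pak conductance argument ``can be adapted'' to handle the periodicity of $M$ and the fact that only its projection onto $\cV$ converges; your write-up supplies exactly that adaptation, in what is surely the intended way. The two non-routine ingredients --- replacing a stationary distribution by the Ces\`aro fixed point $\hat{\Pil}$ with $\pi_3(\hat{\Pil})=\Pil$, and collapsing $M$ to the stochastic, local, $\Pil$-preserving matrix $\Psi$ so that the definition $\Phi=\max_P\Phi(P)$ applies --- are both handled correctly, and the identification of the normalized stationary flow out of $\hat{\cW}^*$ with $\Phi(\Psi)$ together with the entrywise domination $\mu_t\le\hat{\Pil}/\hat{\Pil}(\hat{\cW}^*)$ makes the bottleneck estimate go through. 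The only loose end is the degenerate case $\Pil(\cW^*)=0$, where $\Pil|\cW^*$ is undefined; this is an artifact of the paper's definition of $\Phi$ rather than a gap in your argument, and your remark about restricting to $\mathrm{supp}(\Pil)$ addresses it adequately.
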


By combining Lemma \ref{lemma:conv-M} and Lemma \ref{lemma:bound-M}, we obtain that $2\tau$ must be larger than $1/(4\Phi)$ and hence $\tau$ larger than $1/(8\Phi)$, as stated in the main theorem.

\paragraph{\bf 3. Examples:} We now discuss a few examples to illustrate the generality of our result. Note that the mathematical result is not restricted to cases where $X_t$ represents a probability distribution. It can apply to any situation where $X_t$ remains positive, bounded and preserves the sum of its components. Such dynamics can appear in flow dynamics and e.g.~average consensus algorithms for weight distribution \cite{TsitsiklisThesis}. In such settings our result might suggest how e.g.~relaxing the linearity constraint is necessary for beating the conductance bound.

\paragraph{\bf 3.1 Time-inhomogeneous Markov chains and Cesaro mixing:} The bound clearly includes time-varying Markov chains (that satisfy invariance), as appear in the proof. Practical examples of such processes can be found in \cite{saloffcoste2007}, and in \cite{Mossel} for card shuffling. The difficulty to analyze the convergence time of such processes is explicitly stated. Our paper thus provides a clear bound on the maximal achievable acceleration by exploiting the time-inhomogeneity degree of freedom in mixing algorithms.

Cesaro mixing \cite{levin2009} using a stochastic matrix $P$ is defined by the system $X_0 \oset{t}{\rightarrow} X_t = \frac{1}{t+1}\sum_{k=0}^{t}P^kX_0$. There appears to be no obvious way to write this as a Markov chain. However, one can show that Cesaro mixing satisfies our assumptions, so Thm.1 allows to directly bound the mixing time of such processes.

\paragraph{\bf 3.2 Processes with state-dependent or nonlocal decision rules:} The locality condition concerns how much probability weight is transferred at each step, but not how the decision about this transfer is taken. The latter is constrained by linearity. In this sense, our result directly bounds any attempts at adapting fast converging nonlinear algorithms e.g.~from consensus, towards truly probabilistic Markov chains where linearity is natural. 
Consider a nonlinear update rule from consensus, like:
$$X_{t+1} = P(Z_t)\, X_t \, .$$
In \cite{murray2003}, $Z_t$ is a static function of the weight differences on the respective links, e.g.~the weight associated to link $(i,j)$ in $P(Z)$ is a function of $X_t(i)-X_t(j)$. Our framework would even admit $Z_t$ being a dynamic function of $X$, possibly nonlinear, taking values in any space, and \emph{would not even require that it is based on local values of $X$ only:} e.g.~the weight associated to link $(i,j)$ in $P(Z)$ might be a function of some $X_t(k)$ where node $k$ is totally elsewhere in the graph.

Such update rule is in general not linear in $X_0$, but one may attempt to adapt it in this sense in the hope of designing e.g.~stochastic automata that improve mixing over standard Markov chains. For instance, one might imagine a system that distinguishes, in memory, each part of $X_t$ that has started from a different node at $X_0$. Once this is done, we are free to choose the evolution (possibly nonlinear, nonlocal) for each of these $X_0$-indexed parts, postulating that the full $X_t$ consists of their linear combination. One might thus wonder whether such heuristic approach could lead to faster mixing on e.g.~the dumbbell graph. Our result implies that --- provided also invariance is required --- such acceleration attempts are all limited to the conductance bound.

\paragraph{\bf 3.3 Finite-time convergence:} Consider the following algebraic problem, related to finite-time convergence \cite{hendrickx2014} and the inverse eigenvalue problem \cite{hogben2005}:
 \begin{quote}
  What is the minimal number of symmetric stochastic matrices over a graph $\cG$ whose product has all but one eigenvalue equal to zero?
 \end{quote}
From Theorem \ref{thm:main} it follows that this number is bounded by $1/(8\Phi)$. To see this, note that a set of local, symmetric, stochastic matrices $\{P(l),1\leq l\leq T\}$ over a graph $\cG$ defines a linear and local system by $X_0\oset{t}{\rightarrow} X_t = \left(\Pi_{l=1}^tP(l) \right)X_0$ for $t\leq T$. The system is also invariant as the matrix product leaves the all-ones vector $\vec{1}$ invariant: $\vec{1} \oset{t}{\rightarrow} \vec{1}$. If the product $\Pi_{l=1}^TP(l)$ has all but one eigenvalue equal to zero, the remaining eigenvalue necessarily being 1 with eigenvector $\vec{1}$, then necessarily the system has converged: $X_0 \oset{T}{\rightarrow} \vec{1}/\|X_0\|_1,\forall X_0$ and so $\tau\leq T$. By Theorem \ref{thm:main} the convergence of any linear, local and invariant system is bounded, specifically stating that $T\geq 1/(8\Phi)$.

\paragraph{\bf 3.4 Quantum walks:} The convergence properties of quantum processes spreading over localized state spaces play a role both in physics (e.g.~transport of excitations in photosynthesis \cite{mohseni2008}) and in quantum computation (e.g.~quantum random walks \cite{aharonov2001}). 

A discrete-time quantum walk is (although to our knowledge this has never been said explicitly) the generalization of a lifted walk, by keeping coherences among the node options. Denote by $\rho$ the quantum state, i.e.~a positive definite ``density matrix'' with trace one, whose diagonal represents probabilities over $\hat{\cV} = \{ (i,z) \}$ where $i\in \cV$ is a graph node and $z$ is a possible auxiliary degree of freedom (see introductory example \cite{diaconis2000}, coined quantum walks \cite{aharonov2001}, or Section 2.1). A general quantum walk follows:
$$\rho_{t+1} = \Psi \circ \rho_t$$
where $\Psi$ is a completely positive trace-preserving map; most popular is the unitary quantum walk, where $\Psi \circ \rho_t = U \rho_t U^\dagger$, with $U$ a unitary matrix satisfying the locality of the graph $\cG$, exactly as $P$ does for a Markov chain and $M$ does in Section 2.1. If $\rho_t$ would remain diagonal, this would correspond exactly to a lifted Markov chain \cite{chen1999}.
Authors have been wondering for some time whether the additional information contained in the off-diagonal elements of $\rho_t$ (``quantum coherences'') might allow faster mixing. In \cite{aharonov2001} a conductance bound is given for \emph{unitary} quantum walks and within a factor of the graph degree; but such factor becomes dominant for e.g.~the dumbbell graph. 

As will be further worked out in a future publication, quantum walks do satisfy the conditions of this paper, for most reasonable initializations. Then our result improves the bound of \cite{aharonov2001}, both by generalizing it to non-unitary walks and by getting rid of the degree-dependent factor. Quantum walks indeed satisfy locality, including the hidden (complex) variables representing coherences. Linearity trivially holds, except if one allows to initialize the walk with nonlocal coherences already. In other words, since~$\rho_0$ would necessarily be block-diagonal when all the initial weight is concentrated on a single node, introducing off-diagonal initial blocks when starting with a distribution over nodes would break linearity --- and by Thm.1, this would be necessary to potentially beat the conductance bound.

\end{document}